\newtheorem{proposition}{Proposition}
\begin{document}

\title{\huge Physically Consistent Modeling of Stacked Intelligent Metasurfaces Implemented with Beyond Diagonal RIS}

\author{Matteo~Nerini,~\IEEEmembership{Graduate Student Member,~IEEE},
        Bruno~Clerckx,~\IEEEmembership{Fellow,~IEEE}
        
\thanks{M. Nerini and B. Clerckx are with the Department of Electrical and Electronic Engineering, Imperial College London, London SW7 2AZ, U.K. (e-mail: \{m.nerini20, b.clerckx\}@imperial.ac.uk).}}

\maketitle

\begin{abstract}
Stacked intelligent metasurface (SIM) has emerged as a technology enabling wave domain beamforming through multiple stacked reconfigurable intelligent surfaces (RISs).
SIM has been implemented so far with diagonal RIS (D-RIS), while SIM implemented with beyond diagonal RIS (BD-RIS) remains unexplored.
Furthermore, a model of SIM accounting for mutual coupling is not yet available.
To fill these gaps, we derive a physically consistent channel model for SIM-aided systems and clarify the assumptions needed to obtain the simplified model used in related works.
Using this model, we show that 1-layer SIM implemented with BD-RIS achieves the performance upper bound with limited complexity.
\end{abstract}

\glsresetall

\begin{IEEEkeywords}
Beyond diagonal reconfigurable intelligent surface (BD-RIS), stacked intelligent metasurface (SIM).
\end{IEEEkeywords}

\section{Introduction}

Recently, a novel transceiver design based on stacked intelligent metasurfaces (SIMs) has been proposed \cite{an23a,an23c}.
A SIM consists of multiple stacked layers, each being a reconfigurable intelligent surface (RIS) allowing the transmission of the incident \gls{em} wave in a controllable manner.
Given the presence of multiple reconfigurable layers, a SIM can efficiently perform beamforming in the \gls{em} wave domain \cite{an23a} and implement holographic \gls{mimo} communications \cite{an23c}.
Specifically, SIM technology is characterized by three main benefits that make it appealing for future wireless communications.
First, SIMs can replace conventional digital beamforming and remove the necessity for high-resolution \gls{dac} and \gls{adc}, reducing the hardware cost.
Second, SIM can reduce the number of needed \gls{rf} chains, consequently decreasing energy consumption.
Third, SIM can remove the latency due to the precoding processing since it is performed in the \gls{em} wave domain.

In previous works, SIM layers have been implemented using diagonal RIS (D-RIS), which is the conventional RIS architecture characterized by a diagonal phase shift matrix \cite{an23a,an23c}.
However, to overcome the limited flexibility of D-RIS, beyond diagonal RIS (BD-RIS) has been recently proposed as a generalization of D-RIS \cite{li23-1}.
The novelty introduced in BD-RIS is the presence of tunable impedance components interconnecting the RIS elements to each other, enabling the design of RIS with a scattering matrix that is not limited to being diagonal.
Depending on the location of these tunable interconnections, multiple BD-RIS architectures have been proposed, such as fully-/group-connected RISs \cite{she20} and tree-/forest-connected RISs \cite{ner23-1}.
Given the promising performance enabled by BD-RIS, in this study, we model and compare SIMs implemented with D-RIS and BD-RIS.

To thoroughly compare SIMs built through D-RIS and BD-RIS, a physically consistent model of SIM-aided communication systems is needed.
Despite a path-loss model for SIM-aided systems has been proposed in \cite{has24}, a physically consistent model of SIM accounting for the \gls{em} mutual coupling effects is not yet available.
Nevertheless, \gls{em}-compliant analyses of RIS-aided systems have been developed through multiport network theory using the scattering \cite{she20}, impedance \cite{gra21}, and admittance \cite{ner23-3} parameters.
Furthermore, the relationship between these three analyses has been investigated in \cite{ner23-3,nos23}.
Since multiport network theory has successfully produced physically consistent models for RIS-aided systems, in this study, we employ this theory to model SIM-aided systems.

The contributions of this study are summarized as follows.
\textit{First}, we model a SIM-aided communication system through multiport network theory and derive a general channel model, accounting for the mutual coupling effects at the transmitter, SIM layers, and receiver.
\textit{Second}, we simplify the derived model into a simplified channel model, consistent with the model widely used in recent literature on SIM, clarifying for the first time which are the required assumptions.
\textit{Third}, we analyze and compare SIM architectures implemented with D-RIS and BD-RIS.
Our theoretical analysis corroborated by numerical results shows that 1-layer SIM implemented with BD-RIS achieves the performance upper bound with limited circuit complexity, and that any $L$-layer SIM implemented with D-RIS is suboptimal.

\begin{figure*}[t]
\centering
\includegraphics[width=0.98\textwidth]{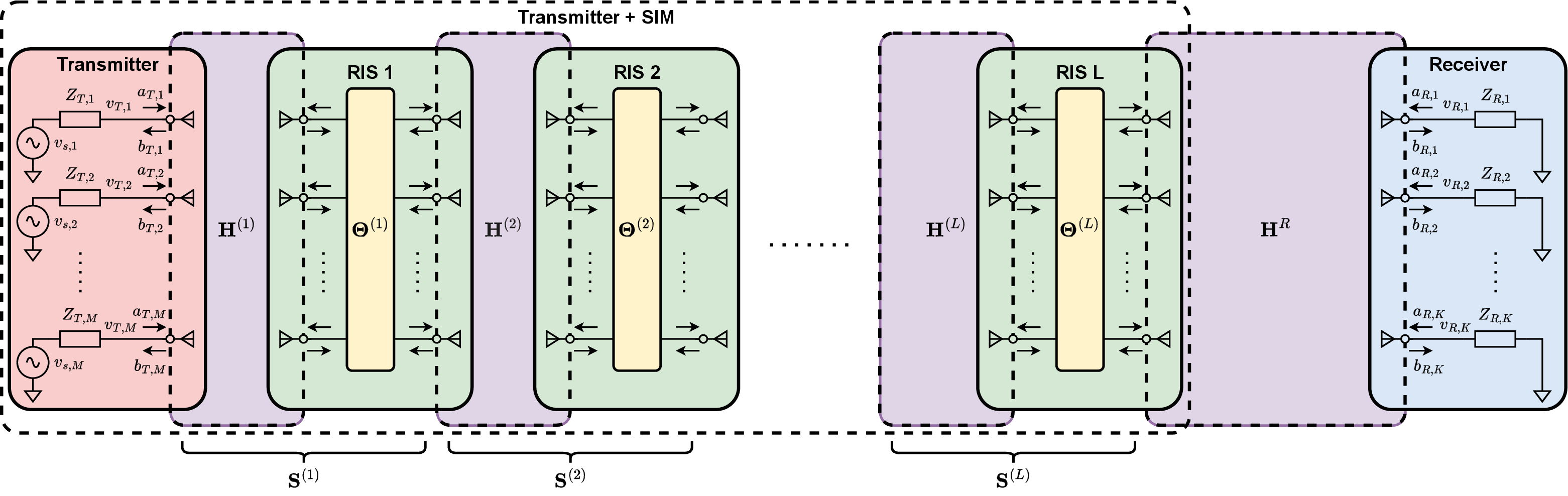}
\caption{SIM-aided communication system diagram.}
\label{fig:SIM}
\end{figure*}

\section{SIM-Aided System Model Based on\\Multiport Network Theory}
\label{sec:model}

Consider a SIM-aided communication system between an $M$-antenna transmitter and a $K$-antenna receiver, as represented in Fig.~\ref{fig:SIM}.
The SIM is deployed at the transmitter and is composed of $L$ layers, each being an $N$-element RIS.
We model this SIM-aided system through multiport network theory as follows \cite[Chapter 4]{poz11}.

At the transmitter, the $m$th antenna is connected in series with a source voltage $v_{s,m}$ and a source impedance $Z_{T,m}$, for $m=1,\ldots,M$.
Defining $\mathbf{a}_T\in\mathbb{C}^{M\times 1}$ and $\mathbf{b}_T\in\mathbb{C}^{M\times 1}$ as the reflected and incident waves at the transmitter, we have
\begin{equation}
\mathbf{a}_T=\mathbf{b}_{s,T}+\mathbf{\Gamma}_T\mathbf{b}_T,\label{eq:TX}
\end{equation}
where $\mathbf{\Gamma}_{T}\in\mathbb{C}^{M\times M}$ is a diagonal matrix containing the reflection coefficients of the source impedances, i.e.,
\begin{equation}
\mathbf{\Gamma}_T=\left(\mathbf{Z}_T+Z_0\mathbf{I}\right)^{-1}\left(\mathbf{Z}_T-Z_0\mathbf{I}\right),\label{eq:GammaT}
\end{equation}
with $\mathbf{Z}_{T}=\textrm{diag}(Z_{T,1},\ldots,Z_{T,M})$ and $Z_0$ denoting the reference impedance typically set to $Z_0=50$~$\Omega$, and $\mathbf{b}_{s,T}\in\mathbb{C}^{M\times 1}$ is the source wave vector given by $\mathbf{b}_{s,T}=(\mathbf{I}-\mathbf{\Gamma}_T)\mathbf{v}_{s,T}/2$, with $\mathbf{v}_{s,T}=[v_{s,1},\ldots,v_{s,M}]^T$ \cite{ner23-3}.

The SIM is a cascade of $L$ blocks, each consisting of the cascade of a wireless channel and an RIS.
Specifically, the $\ell$th block consists of the cascade of the wireless channel between the $(\ell-1)$th and the $\ell$th SIM layer and the RIS implementing the $\ell$th SIM layer, for $\ell=2,\ldots,L$.
We model the channel between the $(\ell-1)$th and the $\ell$th SIM layer as a $2N$-port network with scattering matrix $\mathbf{H}^{(\ell)}\in\mathbb{C}^{2N\times 2N}$ given by
\begin{equation}
\mathbf{H}^{(\ell)}=
\begin{bmatrix}
\mathbf{H}_{11}^{(\ell)} & \mathbf{H}_{12}^{(\ell)}\\
\mathbf{H}_{21}^{(\ell)} & \mathbf{H}_{22}^{(\ell)}
\end{bmatrix},\label{eq:HL}
\end{equation}
where $\mathbf{H}_{11}^{(\ell)}\in\mathbb{C}^{N\times N}$ and $\mathbf{H}_{22}^{(\ell)}\in\mathbb{C}^{N\times N}$ refer to the antenna mutual coupling and self-impedance at the $(\ell-1)$ and $\ell$th SIM layer, respectively, and $\mathbf{H}_{21}^{(\ell)}\in\mathbb{C}^{N\times N}$ refer to the transmission scattering matrix from the $(\ell-1)$th to the $\ell$th SIM layer, for $\ell=2,\ldots,L$.
Beside, we model the RIS at the $\ell$th SIM layer through its scattering matrix $\boldsymbol{\Theta}^{(\ell)}\in\mathbb{C}^{2N\times 2N}$ given by
\begin{equation}
\boldsymbol{\Theta}^{(\ell)}=
\begin{bmatrix}
\boldsymbol{\Theta}_{11}^{(\ell)} & \boldsymbol{\Theta}_{12}^{(\ell)}\\
\boldsymbol{\Theta}_{21}^{(\ell)} & \boldsymbol{\Theta}_{22}^{(\ell)}
\end{bmatrix},\label{eq:TL}
\end{equation}
for $\ell=2,\ldots,L$, which is symmetric in the case of a reciprocal RIS and whose structure depends on the RIS architecture.
In the case the SIM layers are implemented through conventional D-RISs, we have $\boldsymbol{\Theta}_{11}^{(\ell)}=\mathbf{0}$ and $\boldsymbol{\Theta}_{22}^{(\ell)}=\mathbf{0}$ since the RISs work in transmissive mode \cite{li22-1}, and
\begin{equation}
\boldsymbol{\Theta}_{12}^{(\ell)}=\boldsymbol{\Theta}_{21}^{(\ell)}=\text{diag}\left(e^{j\theta_{1}^{(\ell)}},\ldots,e^{j\theta_{N}^{(\ell)}}\right),\label{eq:T21}
\end{equation}
assuming the RISs to be lossless \cite{an23a,an23c}.
However, these constraints valid for D-RISs can be relaxed into the more general constraint $\boldsymbol{\Theta}^{(\ell)H}\boldsymbol{\Theta}^{(\ell)}=\mathbf{I}$ by implementing the SIM layers through BD-RISs \cite{li22-1}.

Note that the first block, i.e., $\ell=1$, consists of the cascade of the wireless channel between the transmitter and the first SIM layer and the RIS implementing the first SIM layer.
Thus, the scattering matrix of the wireless channel of the first block $\mathbf{H}^{(1)}\in\mathbb{C}^{(M+N)\times (M+N)}$ is given as in \eqref{eq:HL}, where $\mathbf{H}_{11}^{(1)}\in\mathbb{C}^{M\times M}$ refer to the antenna mutual coupling and self-impedance at the transmitter.
Besides the scattering matrix of the first SIM layer $\boldsymbol{\Theta}^{(1)}$ is given as in \eqref{eq:TL}.

The wireless channel between the $L$th SIM layer and the receiver is modeled as an $(N+K)$-port network with scattering matrix $\mathbf{H}^{R}\in\mathbb{C}^{(N+K)\times (N+K)}$ given by
\begin{equation}
\mathbf{H}^{R}=
\begin{bmatrix}
\mathbf{H}_{11}^{R} & \mathbf{H}_{12}^{R}\\
\mathbf{H}_{21}^{R} & \mathbf{H}_{22}^{R}
\end{bmatrix},
\end{equation}
where $\mathbf{H}_{11}^{R}\in\mathbb{C}^{N\times N}$ and $\mathbf{H}_{22}^{R}\in\mathbb{C}^{K\times K}$ refer to the antenna mutual coupling and self-impedance at the $L$th SIM layer and the receiver, and $\mathbf{H}_{21}^{R}\in\mathbb{C}^{K\times N}$ refer to the transmission scattering matrix from the $L$th SIM layer to the receiver.

At the receiver, the $k$th antenna is connected in series with a load impedance $Z_{R,k}$, for $k=1,\ldots,K$.
Defining $\mathbf{a}_R\in\mathbb{C}^{K\times 1}$ and $\mathbf{b}_R\in\mathbb{C}^{K\times 1}$ the reflected and incident waves at the receiver, we have
\begin{equation}
\mathbf{a}_R=\mathbf{\Gamma}_R\mathbf{b}_R,\label{eq:RX}
\end{equation}
where $\mathbf{\Gamma}_{R}\in\mathbb{C}^{K\times K}$ is a diagonal matrix containing the reflection coefficients of the load impedances, i.e.,
\begin{equation}
\mathbf{\Gamma}_R=\left(\mathbf{Z}_R+Z_0\mathbf{I}\right)^{-1}\left(\mathbf{Z}_R-Z_0\mathbf{I}\right),\label{eq:GammaR}
\end{equation}
with $\mathbf{Z}_{R}=\textrm{diag}(Z_{R,1},\ldots,Z_{R,K})$ \cite{ner23-3}.

\section{General SIM-Aided Channel Model}

Our goal is to determine the expression of the channel matrix $\mathbf{H}\in\mathbb{C}^{K\times M}$ relating the voltage vector at the transmitter $\mathbf{v}_T\in\mathbb{C}^{M\times 1}$ and the voltage vector at the receiver $\mathbf{v}_R\in\mathbb{C}^{K\times 1}$ though $\mathbf{v}_R=\mathbf{H}\mathbf{v}_T$, where $\mathbf{v}_T$ and $\mathbf{v}_R$ are given by
\begin{equation}
\mathbf{v}_T=\mathbf{a}_T+\mathbf{b}_T,\:\mathbf{v}_R=\mathbf{a}_R+\mathbf{b}_R,\label{eq:vtvr}
\end{equation}
according to multiport network theory \cite[Chapter 4]{poz11}.
To this end, we introduce a proposition enabling the analysis of the considered SIM-aided system.
\begin{proposition}
Consider the cascade system represented in Fig.~\ref{fig:cascade}, consisting of an $N_P$-port network, with scattering matrix $\mathbf{P}\in\mathbb{C}^{N_P\times N_P}$ given by
\begin{equation}
\mathbf{P}=
\begin{bmatrix}
\mathbf{P}_{11} & \mathbf{P}_{12}\\
\mathbf{P}_{21} & \mathbf{P}_{22}
\end{bmatrix},
\end{equation}
where $\mathbf{P}_{11}\in\mathbb{C}^{N_1\times N_1}$ and $\mathbf{P}_{22}\in\mathbb{C}^{N_2\times N_2}$, with $N_P=N_1+N_2$, and an $N_Q$-port network, with scattering matrix $\mathbf{Q}\in\mathbb{C}^{N_Q\times N_Q}$ given by
\begin{equation}
\mathbf{Q}=
\begin{bmatrix}
\mathbf{Q}_{11} & \mathbf{Q}_{12}\\
\mathbf{Q}_{21} & \mathbf{Q}_{22}
\end{bmatrix},
\end{equation}
where $\mathbf{Q}_{11}\in\mathbb{C}^{N_2\times N_2}$ and $\mathbf{Q}_{22}\in\mathbb{C}^{N_3\times N_3}$, with $N_Q=N_2+N_3$.
In this cascade, the last $N_2$ ports of the $N_P$-port network are individually connected to the first $N_2$ ports of the $N_Q$-port network.
Thus, the whole cascade can be regarded as an $N_R$-port network, where $N_R=N_1+N_3$, with scattering matrix $\mathbf{R}\in\mathbb{C}^{N_R\times N_R}$ given by
\begin{equation}
\mathbf{R}=
\begin{bmatrix}
\mathbf{R}_{11} & \mathbf{R}_{12}\\
\mathbf{R}_{21} & \mathbf{R}_{22}
\end{bmatrix},
\end{equation}
with
\begin{equation}
\mathbf{R}_{11}=\mathbf{P}_{11}+\mathbf{P}_{12}\left(\mathbf{I}-\mathbf{Q}_{11}\mathbf{P}_{22}\right)^{-1}\mathbf{Q}_{11}\mathbf{P}_{21},
\end{equation}
\begin{equation}
\mathbf{R}_{12}=\mathbf{P}_{12}\left(\mathbf{I}-\mathbf{Q}_{11}\mathbf{P}_{22}\right)^{-1}\mathbf{Q}_{12},
\end{equation}
\begin{equation}
\mathbf{R}_{21}=\mathbf{Q}_{21}\left(\mathbf{I}-\mathbf{P}_{22}\mathbf{Q}_{11}\right)^{-1}\mathbf{P}_{21},
\end{equation}
\begin{equation}
\mathbf{R}_{22}=\mathbf{Q}_{22}+\mathbf{Q}_{21}\left(\mathbf{I}-\mathbf{P}_{22}\mathbf{Q}_{11}\right)^{-1}\mathbf{P}_{22}\mathbf{Q}_{12}.
\end{equation}
\label{pro:cascade}
\end{proposition}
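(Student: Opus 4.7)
The plan is to follow the standard derivation of the Redheffer star product for cascaded scattering networks, working entirely with the wave relations. First I would introduce labels for the incident and reflected wave vectors at every port group: $(\mathbf{b}_1^P,\mathbf{a}_1^P)$ on the $N_1$ outer ports of $\mathbf{P}$, $(\mathbf{b}_2^P,\mathbf{a}_2^P)$ on the $N_2$ inner ports of $\mathbf{P}$, $(\mathbf{b}_1^Q,\mathbf{a}_1^Q)$ on the $N_2$ inner ports of $\mathbf{Q}$, and $(\mathbf{b}_2^Q,\mathbf{a}_2^Q)$ on the $N_3$ outer ports of $\mathbf{Q}$. The four block relations $\mathbf{a}_i^P=\mathbf{P}_{i1}\mathbf{b}_1^P+\mathbf{P}_{i2}\mathbf{b}_2^P$ and their $\mathbf{Q}$-counterparts furnish the starting linear system.

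Next I would impose the interconnection constraints $\mathbf{b}_1^Q=\mathbf{a}_2^P$ and $\mathbf{b}_2^P=\mathbf{a}_1^Q$, encoding that at each connected port pair the outgoing wave from one side is the incoming wave on the other. Substituting the scattering equation for $\mathbf{a}_1^Q$ into $\mathbf{b}_2^P=\mathbf{a}_1^Q$ and then eliminating $\mathbf{b}_1^Q$ via $\mathbf{b}_1^Q=\mathbf{P}_{21}\mathbf{b}_1^P+\mathbf{P}_{22}\mathbf{b}_2^P$ yields $(\mathbf{I}-\mathbf{Q}_{11}\mathbf{P}_{22})\mathbf{b}_2^P=\mathbf{Q}_{11}\mathbf{P}_{21}\mathbf{b}_1^P+\mathbf{Q}_{12}\mathbf{b}_2^Q$; a symmetric elimination gives $(\mathbf{I}-\mathbf{P}_{22}\mathbf{Q}_{11})\mathbf{b}_1^Q=\mathbf{P}_{21}\mathbf{b}_1^P+\mathbf{P}_{22}\mathbf{Q}_{12}\mathbf{b}_2^Q$. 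Inverting the two multiple-reflection matrices (well-defined for any physically realizable passive cascade) expresses the internal waves as explicit affine functions of the external incident waves $\mathbf{b}_1^P$ and $\mathbf{b}_2^Q$.

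Finally I would substitute these expressions into $\mathbf{a}_1^P=\mathbf{P}_{11}\mathbf{b}_1^P+\mathbf{P}_{12}\mathbf{b}_2^P$ and $\mathbf{a}_2^Q=\mathbf{Q}_{21}\mathbf{b}_1^Q+\mathbf{Q}_{22}\mathbf{b}_2^Q$, then collect the coefficients of $\mathbf{b}_1^P$ and $\mathbf{b}_2^Q$. Identifying $(\mathbf{a}_1^P,\mathbf{a}_2^Q)$ with the outputs of the composite network and $(\mathbf{b}_1^P,\mathbf{b}_2^Q)$ with its inputs reads off the four claimed blocks $\mathbf{R}_{11},\mathbf{R}_{12},\mathbf{R}_{21},\mathbf{R}_{22}$ directly. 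The main obstacle is pure bookkeeping: ensuring that the interconnection correctly swaps the roles of $\mathbf{a}$ and $\mathbf{b}$ at the shared ports and carefully distinguishing the two formally different inverses $(\mathbf{I}-\mathbf{Q}_{11}\mathbf{P}_{22})^{-1}$ and $(\mathbf{I}-\mathbf{P}_{22}\mathbf{Q}_{11})^{-1}$, which appear respectively in the top and bottom block rows of $\mathbf{R}$; once the indexing is pinned down, the derivation reduces to straightforward linear algebra.
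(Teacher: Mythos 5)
Your proposal is correct and follows essentially the same route as the paper's Appendix: write the block scattering relations for $\mathbf{P}$ and $\mathbf{Q}$, impose the interconnection identifying each network's outgoing wave at the shared ports with the other's incoming wave, eliminate the internal waves via the two resolvents $\left(\mathbf{I}-\mathbf{Q}_{11}\mathbf{P}_{22}\right)^{-1}$ and $\left(\mathbf{I}-\mathbf{P}_{22}\mathbf{Q}_{11}\right)^{-1}$, and read off the blocks of $\mathbf{R}$. The only difference is notational (your input/output labeling of $\mathbf{a}$ and $\mathbf{b}$ versus the paper's incident/reflected convention), which does not affect the argument.
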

\begin{proof}
Please refer to the Appendix.
\end{proof}

\begin{figure}[t]
\centering
\includegraphics[width=0.32\textwidth]{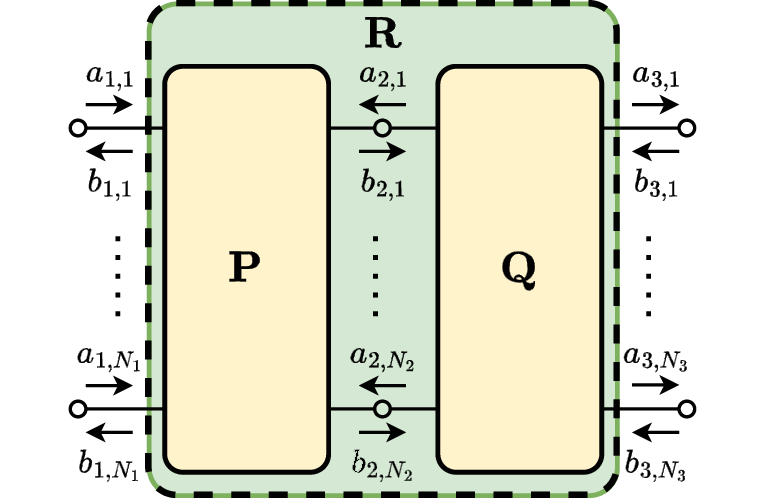}
\caption{Cascade of two multiport networks.}
\label{fig:cascade}
\end{figure}

To derive the expression of $\mathbf{H}$, we first recall that the $\ell$th block of the SIM is formed by the cascade of the channel $\mathbf{H}^{(\ell)}$ and the RIS $\boldsymbol{\Theta}^{(\ell)}$, given by \eqref{eq:HL} and \eqref{eq:TL}, respectively.
Thus, the scattering matrix of the $\ell$th block is given by
\begin{equation}
\mathbf{S}^{(\ell)}=
\begin{bmatrix}
\mathbf{S}_{11}^{(\ell)} & \mathbf{S}_{12}^{(\ell)}\\
\mathbf{S}_{21}^{(\ell)} & \mathbf{S}_{22}^{(\ell)}
\end{bmatrix},\label{eq:SL}
\end{equation}
with
\begin{equation}
\mathbf{S}_{11}^{(\ell)}=\mathbf{H}_{11}^{(\ell)}+\mathbf{H}_{12}^{(\ell)}\left(\mathbf{I}-\boldsymbol{\Theta}_{11}^{(\ell)}\mathbf{H}_{22}^{(\ell)}\right)^{-1}\boldsymbol{\Theta}_{11}^{(\ell)}\mathbf{H}_{21}^{(\ell)},
\end{equation}
\begin{equation}
\mathbf{S}_{12}^{(\ell)}=\mathbf{H}_{12}^{(\ell)}\left(\mathbf{I}-\boldsymbol{\Theta}_{11}^{(\ell)}\mathbf{H}_{22}^{(\ell)}\right)^{-1}\boldsymbol{\Theta}_{12}^{(\ell)},
\end{equation}
\begin{equation}
\mathbf{S}_{21}^{(\ell)}=\boldsymbol{\Theta}_{21}^{(\ell)}\left(\mathbf{I}-\mathbf{H}_{22}^{(\ell)}\boldsymbol{\Theta}_{11}^{(\ell)}\right)^{-1}\mathbf{H}_{21}^{(\ell)},
\end{equation}
\begin{equation}
\mathbf{S}_{22}^{(\ell)}=\boldsymbol{\Theta}_{22}^{(\ell)}+\boldsymbol{\Theta}_{21}^{(\ell)}\left(\mathbf{I}-\mathbf{H}_{22}^{(\ell)}\boldsymbol{\Theta}_{11}^{(\ell)}\right)^{-1}\mathbf{H}_{22}^{(\ell)}\boldsymbol{\Theta}_{12}^{(\ell)},
\end{equation}
for $\ell=1,\ldots,L$, following Proposition~\ref{pro:cascade}.
Given $\mathbf{S}^{(\ell)}$, the equivalent scattering matrix including the first $\ell$ blocks of the SIM can be determined recursively by
\begin{equation}
\bar{\mathbf{S}}^{(\ell)}=
\begin{bmatrix}
\bar{\mathbf{S}}_{11}^{(\ell)} & \bar{\mathbf{S}}_{12}^{(\ell)}\\
\bar{\mathbf{S}}_{21}^{(\ell)} & \bar{\mathbf{S}}_{22}^{(\ell)}
\end{bmatrix},
\end{equation}
with
\begin{equation}
\bar{\mathbf{S}}_{11}^{(\ell)}=\bar{\mathbf{S}}_{11}^{(\ell-1)}+\bar{\mathbf{S}}_{12}^{(\ell-1)}\left(\mathbf{I}-\mathbf{S}_{11}^{(\ell)}\bar{\mathbf{S}}_{22}^{(\ell-1)}\right)^{-1}\mathbf{S}_{11}^{(\ell)}\bar{\mathbf{S}}_{21}^{(\ell-1)},
\end{equation}
\begin{equation}
\bar{\mathbf{S}}_{12}^{(\ell)}=\bar{\mathbf{S}}_{12}^{(\ell-1)}\left(\mathbf{I}-\mathbf{S}_{11}^{(\ell)}\bar{\mathbf{S}}_{22}^{(\ell-1)}\right)^{-1}\mathbf{S}_{12}^{(\ell)},
\end{equation}
\begin{equation}
\bar{\mathbf{S}}_{21}^{(\ell)}=\mathbf{S}_{21}^{(\ell)}\left(\mathbf{I}-\bar{\mathbf{S}}_{22}^{(\ell-1)}\mathbf{S}_{11}^{(\ell)}\right)^{-1}\bar{\mathbf{S}}_{21}^{(\ell-1)},
\end{equation}
\begin{equation}
\bar{\mathbf{S}}_{22}^{(\ell)}=\mathbf{S}_{22}^{(\ell)}+\mathbf{S}_{21}^{(\ell)}\left(\mathbf{I}-\bar{\mathbf{S}}_{22}^{(\ell-1)}\mathbf{S}_{11}^{(\ell)}\right)^{-1}\bar{\mathbf{S}}_{22}^{(\ell-1)}\mathbf{S}_{12}^{(\ell)},
\end{equation}
as a function of $\bar{\mathbf{S}}^{(\ell-1)}$ and $\mathbf{S}^{(\ell)}$, where $\bar{\mathbf{S}}^{(1)}=\mathbf{S}^{(1)}$, according to Proposition~\ref{pro:cascade}.
In particular, $\bar{\mathbf{S}}^{(L)}$ represents the equivalent scattering matrix including all the $L$ blocks of the SIM.
From $\bar{\mathbf{S}}^{(L)}$ and the channel $\mathbf{H}^{R}$, the scattering matrix of the entire channel between the transmitter and the receiver is given by the cascade of $\bar{\mathbf{S}}^{(L)}$ and $\mathbf{H}^{R}$, and writes as
\begin{equation}
\mathbf{S}=
\begin{bmatrix}
\mathbf{S}_{11} & \mathbf{S}_{12}\\
\mathbf{S}_{21} & \mathbf{S}_{22}
\end{bmatrix},
\end{equation}
with
\begin{equation}
\mathbf{S}_{11}=\bar{\mathbf{S}}_{11}^{(L)}+\bar{\mathbf{S}}_{12}^{(L)}\left(\mathbf{I}-\mathbf{H}_{11}^{R}\bar{\mathbf{S}}_{22}^{(L)}\right)^{-1}\mathbf{H}_{11}^{R}\bar{\mathbf{S}}_{21}^{(L)},
\end{equation}
\begin{equation}
\mathbf{S}_{12}=\bar{\mathbf{S}}_{12}^{(L)}\left(\mathbf{I}-\mathbf{H}_{11}^{R}\bar{\mathbf{S}}_{22}^{(L)}\right)^{-1}\mathbf{H}_{12}^{R},
\end{equation}
\begin{equation}
\mathbf{S}_{21}=\mathbf{H}_{21}^{R}\left(\mathbf{I}-\bar{\mathbf{S}}_{22}^{(L)}\mathbf{H}_{11}^{R}\right)^{-1}\bar{\mathbf{S}}_{21}^{(L)},
\end{equation}
\begin{equation}
\mathbf{S}_{22}=\mathbf{H}_{22}^{R}+\mathbf{H}_{21}^{R}\left(\mathbf{I}-\bar{\mathbf{S}}_{22}^{(L)}\mathbf{H}_{11}^{R}\right)^{-1}\bar{\mathbf{S}}_{22}^{(L)}\mathbf{H}_{12}^{R},
\end{equation}
according to Proposition~\ref{pro:cascade}.

To express $\mathbf{H}$ as a function of $\mathbf{S}$, we consider $\mathbf{Z}_T=Z_0\mathbf{I}$ and $\mathbf{Z}_R=Z_0\mathbf{I}$ for simplicity, so that $\mathbf{\Gamma}_T=\mathbf{0}$ and $\mathbf{\Gamma}_R=\mathbf{0}$ according to \eqref{eq:GammaT} and \eqref{eq:GammaR}, respectively.
Consequently, $\mathbf{H}$ can be derived by solving the system
\begin{numcases}{}
\begin{bmatrix}
\mathbf{b}_T\\
\mathbf{b}_R
\end{bmatrix}=
\begin{bmatrix}
\mathbf{S}_{11} & \mathbf{S}_{12}\\
\mathbf{S}_{21} & \mathbf{S}_{22}
\end{bmatrix}
\begin{bmatrix}
\mathbf{a}_T\\
\mathbf{a}_R
\end{bmatrix},\label{eq:sis1}\\
\mathbf{a}_T=\mathbf{b}_{s,T},\:\mathbf{a}_R=\mathbf{0},\label{eq:sis2}\\
\eqref{eq:vtvr},\label{eq:sis3}
\end{numcases}
where \eqref{eq:sis1} follows the definition of scattering matrix \cite[Chapter 4]{poz11} and \eqref{eq:sis2} is given by \eqref{eq:TX} and \eqref{eq:RX}.
Since it is possible to show that system \eqref{eq:sis1}-\eqref{eq:sis3} yields $\mathbf{v}_R=\mathbf{S}_{21}\left(\mathbf{I}+\mathbf{S}_{11}\right)^{-1}\mathbf{v}_T$, we have
\begin{equation}
\mathbf{H}=\mathbf{S}_{21}\left(\mathbf{I}+\mathbf{S}_{11}\right)^{-1},\label{eq:Hgen}
\end{equation}
giving the general expression of a SIM-aided channel.
Remarkably, it is hard to understand the role of the reconfigurable scattering matrices $\boldsymbol{\Theta}^{(\ell)}$ in \eqref{eq:Hgen}.
For this reason, we simplify this channel model and show under which assumptions it boils down to the easier model widely used in recent works on SIM.

\section{Simplified SIM-Aided Channel Model}

To simplify the channel model in \eqref{eq:Hgen}, we assume that the unilateral approximation is valid in all the channels $\mathbf{H}^{(\ell)}$, i.e., $\mathbf{H}_{12}^{(\ell)}=\mathbf{0}$, for $\ell=1,\ldots,L$ \cite{ivr10}.
In other words, we assume that the electrical properties at the $(\ell-1)$th SIM layer are independent of the electrical properties at the $\ell$th SIM layer (where the $0$th SIM layer is the transmitter).
In addition, we assume no mutual coupling between the transmit and RIS antennas, and perfect matching to $Z_0$, i.e., $\mathbf{H}_{11}^{(\ell)}=\mathbf{0}$ and $\mathbf{H}_{22}^{(\ell)}=\mathbf{0}$, for $\ell=1,\ldots,L$, and $\mathbf{H}_{11}^{R}=\mathbf{0}$ \cite{ner23-3}.
Given the short transmission distances in a SIM, extensive electromagnetic simulations and real-world measurements are needed to verify the validity of these assumptions, which is beyond the scope of this study.
With these assumptions, the scattering matrix of the $\ell$th block of the SIM given by \eqref{eq:SL} simplifies as
\begin{equation}
\mathbf{S}^{(\ell)}=
\begin{bmatrix}
\mathbf{0} & \mathbf{0}\\
\boldsymbol{\Theta}_{21}^{(\ell)}\mathbf{H}_{21}^{(\ell)} & \boldsymbol{\Theta}_{22}^{(\ell)}
\end{bmatrix},
\end{equation}
for $\ell=1,\ldots,L$.
Thus, the equivalent scattering matrix of the first $\ell$ blocks of the SIM is given recursively as
\begin{equation}
\bar{\mathbf{S}}^{(\ell)}=
\begin{bmatrix}
\bar{\mathbf{S}}_{11}^{(\ell-1)} & \mathbf{0}\\
\mathbf{S}_{21}^{(\ell)}\bar{\mathbf{S}}_{21}^{(\ell-1)} & \mathbf{S}_{22}^{(\ell)}
\end{bmatrix},
\end{equation}
where $\bar{\mathbf{S}}^{(1)}=\mathbf{S}^{(1)}$.
Consequently, the equivalent scattering matrix including all the $L$ blocks of the SIM can now be written in closed-form as
\begin{equation}
\bar{\mathbf{S}}^{(L)}=
\begin{bmatrix}
\mathbf{0} & \mathbf{0}\\
\boldsymbol{\Theta}_{21}^{(L)}\mathbf{H}_{21}^{(L)}\cdots\boldsymbol{\Theta}_{21}^{(2)}\mathbf{H}_{21}^{(2)}\boldsymbol{\Theta}_{21}^{(1)}\mathbf{H}_{21}^{(1)} & \boldsymbol{\Theta}_{22}^{(L)}
\end{bmatrix}.
\end{equation}
Given $\bar{\mathbf{S}}^{(L)}$ and $\mathbf{H}^{R}$, the scattering matrix of the entire channel between the transmitter and the receiver writes as
\begin{equation}
\mathbf{S}=
\begin{bmatrix}
\mathbf{0} & \mathbf{0}\\
\mathbf{H}_{21}^{R}\boldsymbol{\Theta}_{21}^{(L)}\mathbf{H}_{21}^{(L)}\cdots\boldsymbol{\Theta}_{21}^{(1)}\mathbf{H}_{21}^{(1)} & \mathbf{H}_{22}^{R}+\mathbf{H}_{21}^{R}\boldsymbol{\Theta}_{22}^{(L)}\mathbf{H}_{12}^{R}
\end{bmatrix},\label{eq:S}
\end{equation}
by Proposition~\ref{pro:cascade}.
By substituting \eqref{eq:S} into \eqref{eq:Hgen}, and using the auxiliary notation $\bar{\mathbf{H}}^{R}=\mathbf{H}_{21}^{R}$, $\bar{\boldsymbol{\Theta}}^{(\ell)}=\boldsymbol{\Theta}_{21}^{(\ell)}$, and $\bar{\mathbf{H}}^{(\ell)}=\mathbf{H}_{21}^{(\ell)}$, for $\ell=1,\ldots,L$, we finally obtain
\begin{equation}
\mathbf{H}=\bar{\mathbf{H}}^{R}\bar{\boldsymbol{\Theta}}^{(L)}\bar{\mathbf{H}}^{(L)}\cdots\bar{\boldsymbol{\Theta}}^{(2)}\bar{\mathbf{H}}^{(2)}\bar{\boldsymbol{\Theta}}^{(1)}\bar{\mathbf{H}}^{(1)},\label{eq:Hsim}
\end{equation}
which is the simplified channel model, in agreement with the model widely utilized in recent works \cite{an23a,an23c,has24}.

\section{Implementing SIM with D-RIS and BD-RIS}

Consider a \gls{siso} system, i.e., $M=1$ and $K=1$, where we maximize the channel gain
\begin{equation}
\left\vert h\right\vert^2=\left\vert\bar{\mathbf{h}}^{R}\bar{\boldsymbol{\Theta}}^{(L)}\bar{\mathbf{H}}^{(L)}\cdots\bar{\boldsymbol{\Theta}}^{(2)}\bar{\mathbf{H}}^{(2)}\bar{\boldsymbol{\Theta}}^{(1)}\bar{\mathbf{h}}^{(1)}\right\vert^2,\label{eq:gain}
\end{equation}
by optimizing the SIM, i.e., $\bar{\boldsymbol{\Theta}}^{(\ell)}$, for $\ell=1,\ldots,L$.

In the case of a SIM implemented with D-RIS (the only example of SIM available in literature), $\bar{\boldsymbol{\Theta}}^{(\ell)}$ are subject to \eqref{eq:T21} and a generous upper bound on the achievable gain is
\begin{equation}
\left\vert h^{\text{D-RIS}}\right\vert^2
\leq\left\Vert\bar{\mathbf{h}}^{R}\right\Vert^2\left\Vert\bar{\mathbf{H}}^{(L)}\right\Vert^2\cdots\left\Vert\bar{\mathbf{H}}^{(2)}\right\Vert^2\left\Vert\bar{\mathbf{h}}^{(1)}\right\Vert^2,\label{eq:DRIS-UB}
\end{equation}
following the submultiplicativity property of the spectral norm and noticing that $\Vert\bar{\boldsymbol{\Theta}}^{(\ell)}\Vert=1$, for $\ell=1,\ldots,L$.
To optimize the SIM, we propose an iterative algorithm that optimizes a matrix $\bar{\boldsymbol{\Theta}}^{(\ell)}$ at each iteration through a closed-form globally optimal solution.
Specifically, at each iteration, $\bar{\boldsymbol{\Theta}}^{(\ell)}$ is updated by introducing $\mathbf{u}^{(\ell)}=\bar{\mathbf{h}}^{R}\bar{\boldsymbol{\Theta}}^{(L)}\bar{\mathbf{H}}^{(L)}\cdots\bar{\boldsymbol{\Theta}}^{(\ell+1)}\bar{\mathbf{H}}^{(\ell+1)}$ and $\mathbf{v}^{(\ell)}=\bar{\mathbf{H}}^{(\ell)}\bar{\boldsymbol{\Theta}}^{(\ell-1)}\bar{\mathbf{H}}^{(\ell-1)}\cdots\bar{\boldsymbol{\Theta}}^{(1)}\bar{\mathbf{h}}^{(1)}$, and setting $\theta_n^{(\ell)}=-\mathrm{arg}([\mathbf{u}^{(\ell)}]_n[\mathbf{v}^{(\ell)}]_n)$,
for $n=1,\ldots,N$.
The algorithm iterates until convergence of the objective \eqref{eq:gain}.
In terms of circuit complexity, when the SIM layers are implemented through D-RISs working in transmissive mode, the number of tunable impedances in the circuit topology is $3N$ per layer \cite{li22-1}, yielding $3NL$ total tunable impedances in the SIM architecture.

In the case of a SIM implemented with BD-RIS, we first consider $L=1$ and we show that this is the optimal number of layers.
When $L=1$, \eqref{eq:gain} boils down to $\vert h\vert^2=\vert\bar{\mathbf{h}}^{R}\bar{\boldsymbol{\Theta}}^{(1)}\bar{\mathbf{h}}^{(1)}\vert^2$, which can be rewritten as $\vert h\vert^2=\vert[\mathbf{0},\bar{\mathbf{h}}^{R}]\boldsymbol{\Theta}^{(1)}[\bar{\mathbf{h}}^{(1)T},\mathbf{0}]^T\vert^2$,
where $\mathbf{0}$ denotes here a $1\times N$ zero vector and $\boldsymbol{\Theta}^{(1)}$ is subject to $\boldsymbol{\Theta}^{(1)}=\boldsymbol{\Theta}^{(1)T}$ and $\boldsymbol{\Theta}^{(1)H}\boldsymbol{\Theta}^{(1)}=\mathbf{I}$.
Interestingly, it has been shown in \cite{ner22} that $\boldsymbol{\Theta}^{(1)}$ can be optimized to obtain
\begin{equation}
\left\vert h^{\text{BD-RIS}}\right\vert^2
=\left\Vert\bar{\mathbf{h}}^{R}\right\Vert^2\left\Vert\bar{\mathbf{h}}^{(1)}\right\Vert^2.\label{eq:BDRIS-UB}
\end{equation}
Besides, the performance upper bound in \eqref{eq:BDRIS-UB} can also be achieved with low-complexity BD-RIS architectures denoted as tree-connected RISs through a closed-form solution \cite{ner23-1}.
Since it holds that $\Vert\bar{\mathbf{H}}^{(\ell)}\Vert<1$ for the wireless channels $\bar{\mathbf{H}}^{(\ell)}$, the upper bound in \eqref{eq:BDRIS-UB} is always greater than \eqref{eq:DRIS-UB}.
Thus, $L=1$ is the optimal number of layers in a SIM implemented with tree-connected RISs.
Furthermore, we have $\left\vert h^{\text{D-RIS}}\right\vert^2\leq\left\vert h^{\text{BD-RIS}}\right\vert^2$ as a consequence of \eqref{eq:DRIS-UB} and \eqref{eq:BDRIS-UB}.
Considering a 1-layer SIM implemented through a tree-connected RIS, the number of needed tunable impedances is $4N-1$ \cite{ner23-1}.
Note that prototyping SIM based on BD-RIS poses the same challenges as implementing BD-RIS with transmissive capabilities \cite{li22-1}.

\section{Numerical Results}

Consider a 3D system where we locate a \gls{siso} system between a SIM-aided transmitter and a receiver\footnote{Note that the effective number of antennas transmitting the signal is $N$, i.e., the number of RIS elements of the $L$th SIM layer.}.
The transmit antenna is located at $(0,0,0)$ and the SIM layers are $N_x\times N_y$ \glspl{upa} parallel to the $x$-$y$ plane and centered in $(0,0)$.
The SIM layers are spaced $\lambda$ from each other and from the transmit antenna, where $\lambda$ denotes the wavelength.
The RIS elements in each SIM layer have inter-element distance $\lambda/2$ and we set $N_x=4$ and carrier frequency 28~GHz.
The channels $\bar{\mathbf{H}}^{(\ell)}$ are given by diffraction theory as
\begin{equation}
\left[\bar{\mathbf{H}}^{(\ell)}\right]_{i,j}=\frac{A\cos\left(\alpha_{i,j}^{(\ell)}\right)}{d_{i,j}^{(\ell)}}\left(\frac{1}{2\pi d_{i,j}^{(\ell)}}-j\frac{1}{\lambda}\right)e^{j2\pi d_{i,j}^{(\ell)}/\lambda},
\end{equation}
where $A=(\lambda/2)^2$ is the RIS element area, $\alpha_{i,j}^{(\ell)}$ is the angle between the normal of the SIM layers and the propagation direction and $d_{i,j}^{(\ell)}$ is the transmission distance \cite{an23a}.
$\bar{\mathbf{h}}^{R}$ is \gls{iid} Rayleigh distributed.

In Fig.~\ref{fig:results}, we report the performance in terms of normalized channel gain, defined as $G=\vert h\vert^2/(\Vert\bar{\mathbf{h}}^{R}\Vert^2\Vert\bar{\mathbf{h}}^{(1)}\Vert^2)$, achieved by SIM implemented with D-RIS and BD-RIS.
For SIM implemented with BD-RIS, we consider a 1-layer SIM implemented through a tree-connected RIS, optimized as proposed in \cite{ner23-1}.
We make the following observations.
\textit{First}, 1-layer SIMs implemented with BD-RIS achieve the upper bound $G=1$, as expected from \eqref{eq:BDRIS-UB}.
\textit{Second}, SIMs implemented with D-RIS always underperform 1-layer SIMs implemented with BD-RIS, in agreement with our theoretical derivations.
\textit{Third}, for a low number of RIS elements $N$, SIMs with a low number of layers $L$ are beneficial.
This is because a low $N$ reduces the norms $\Vert\bar{\mathbf{H}}^{(\ell)}\Vert$, for $\ell=1,\ldots,L$, which are always $\Vert\bar{\mathbf{H}}^{(\ell)}\Vert<1$.
Consequently, when $\Vert\bar{\mathbf{H}}^{(\ell)}\Vert<<1$, a high $L$ is suboptimal as it limits the product in the performance upper bound \eqref{eq:DRIS-UB}.
\textit{Fourth}, for a high value of $N$, SIMs with more layers are beneficial given their enhanced flexibility.

In Fig.~\ref{fig:results}, we also report the circuit complexity, in terms of the number of tunable impedance components in the circuit topology, also representative of the energy consumption of the SIM.
We observe that 1-layer SIMs implemented with BD-RIS are also favorable in terms of complexity, as they are slightly more complex than 1-layer SIMs implemented with D-RIS.

\begin{figure}[t]
\centering
\includegraphics[width=0.24\textwidth]{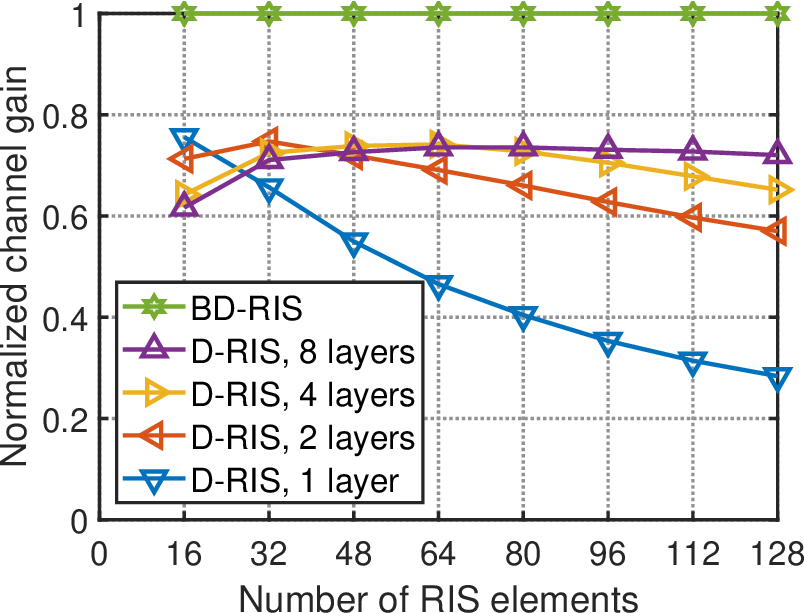}
\includegraphics[width=0.24\textwidth]{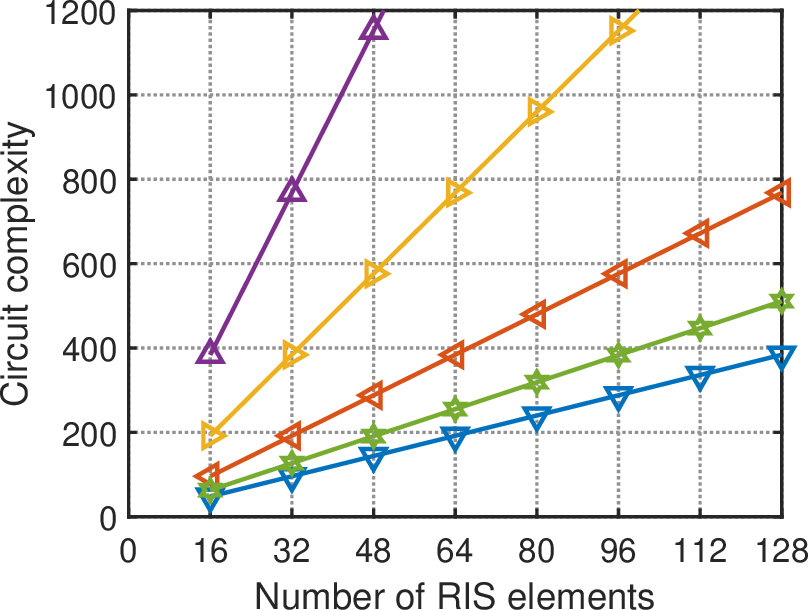}
\caption{Normalized channel gain $G=\vert h\vert^2/(\Vert\bar{\mathbf{h}}^{R}\Vert^2\Vert\bar{\mathbf{h}}^{(1)}\Vert^2)$ and circuit complexity of SIM implemented with D-RIS and BD-RIS.}
\label{fig:results}
\end{figure}

\section{Conclusion}

We analyze SIM-aided systems by using multiport network theory and derive a physically consistent channel model.
To understand the role of the SIM on the derived model, we make two assumptions and obtain a simplified model in agreement with the model employed in previous works.
We compare SIM implemented with D-RIS and BD-RIS, showing that 1-layer SIM based on BD-RIS outperforms any SIM based on D-RIS.
Future research directions include studying the impact of the assumptions made on the channel model, optimizing SIM in more general systems, and prototyping of SIM.

\section*{Appendix}

Denote as $\mathbf{a}_1=[a_{1,1},\ldots,a_{1,N_1}]^T$ and $\mathbf{b}_1=[b_{1,1},\ldots,b_{1,N_1}]^T$ the incident and reflected waves on the first $N_1$ ports of the $N_P$-port network; as $\mathbf{a}_2=[a_{2,1},\ldots,a_{2,N_2}]^T$ and $\mathbf{b}_2=[b_{2,1},\ldots,b_{2,N_2}]^T$ the incident and reflected waves on the last $N_2$ ports of the $N_P$-port network; and as $\mathbf{a}_3=[a_{3,1},\ldots,a_{3,N_3}]^T$ and $\mathbf{b}_3=[b_{3,1},\ldots,b_{3,N_3}]^T$ the reflected and incident waves on the last $N_3$ ports of the $N_Q$-port network.
$\mathbf{a}_2$ and $\mathbf{b}_2$ are also the reflected and incident waves on the first $N_2$ ports of the $N_Q$-port network.
Thus, we have
\begin{equation}
\begin{bmatrix}
\mathbf{b}_{1}\\
\mathbf{b}_{2}
\end{bmatrix}=
\begin{bmatrix}
\mathbf{P}_{11} & \mathbf{P}_{12}\\
\mathbf{P}_{21} & \mathbf{P}_{22}
\end{bmatrix}
\begin{bmatrix}
\mathbf{a}_{1}\\
\mathbf{a}_{2}
\end{bmatrix},
\:
\begin{bmatrix}
\mathbf{a}_{2}\\
\mathbf{a}_{3}
\end{bmatrix}=
\begin{bmatrix}
\mathbf{Q}_{11} & \mathbf{Q}_{12}\\
\mathbf{Q}_{21} & \mathbf{Q}_{22}
\end{bmatrix}
\begin{bmatrix}
\mathbf{b}_{2}\\
\mathbf{b}_{3}
\end{bmatrix},\label{eq:PQ}
\end{equation}
and our goal is to characterize $\mathbf{R}$ such that
\begin{equation}
\begin{bmatrix}
\mathbf{b}_{1}\\
\mathbf{a}_{3}
\end{bmatrix}=
\begin{bmatrix}
\mathbf{R}_{11} & \mathbf{R}_{12}\\
\mathbf{R}_{21} & \mathbf{R}_{22}
\end{bmatrix}
\begin{bmatrix}
\mathbf{a}_{1}\\
\mathbf{b}_{3}
\end{bmatrix},
\end{equation}
according to \cite[Chapter 4]{poz11}.
To this end, we first use \eqref{eq:PQ} to express $\mathbf{a}_{2}$ and $\mathbf{b}_{2}$ as functions of $\mathbf{a}_{1}$ and $\mathbf{b}_{3}$ as
\begin{equation}
\mathbf{a}_{2}
=\left(\mathbf{I}-\mathbf{Q}_{11}\mathbf{P}_{22}\right)^{-1}\mathbf{Q}_{11}\mathbf{P}_{21}\mathbf{a}_{1}+\left(\mathbf{I}-\mathbf{Q}_{11}\mathbf{P}_{22}\right)^{-1}\mathbf{Q}_{12}\mathbf{b}_{3},\label{eq:a2}
\end{equation}
\begin{equation}
\mathbf{b}_{2}
=\left(\mathbf{I}-\mathbf{P}_{22}\mathbf{Q}_{11}\right)^{-1}\mathbf{P}_{21}\mathbf{a}_{1}+\left(\mathbf{I}-\mathbf{P}_{22}\mathbf{Q}_{11}\right)^{-1}\mathbf{P}_{22}\mathbf{Q}_{12}\mathbf{b}_{3}.\label{eq:b2}
\end{equation}
Then, by substituting \eqref{eq:a2} into $\mathbf{b}_{1}=\mathbf{P}_{11}\mathbf{a}_{1}+\mathbf{P}_{12}\mathbf{a}_{2}$, which is derived from \eqref{eq:PQ}, we obtain
\begin{multline}
\mathbf{b}_{1}=\mathbf{P}_{11}\mathbf{a}_{1}+\mathbf{P}_{12}\left(\mathbf{I}-\mathbf{Q}_{11}\mathbf{P}_{22}\right)^{-1}\mathbf{Q}_{11}\mathbf{P}_{21}\mathbf{a}_{1}\\
+\mathbf{P}_{12}\left(\mathbf{I}-\mathbf{Q}_{11}\mathbf{P}_{22}\right)^{-1}\mathbf{Q}_{12}\mathbf{b}_{3},
\end{multline}
and by substituting \eqref{eq:b2} into $\mathbf{a}_{3}=\mathbf{Q}_{21}\mathbf{b}_{2}+\mathbf{Q}_{22}\mathbf{b}_{3}$, we obtain
\begin{multline}
\mathbf{a}_{3}=\mathbf{Q}_{21}\left(\mathbf{I}-\mathbf{P}_{22}\mathbf{Q}_{11}\right)^{-1}\mathbf{P}_{21}\mathbf{a}_{1}\\
+\mathbf{Q}_{21}\left(\mathbf{I}-\mathbf{P}_{22}\mathbf{Q}_{11}\right)^{-1}\mathbf{P}_{22}\mathbf{Q}_{12}\mathbf{b}_{3}+\mathbf{Q}_{22}\mathbf{b}_{3},
\end{multline}
proving the proposition.

\bibliographystyle{IEEEtran}
\bibliography{IEEEabrv,main}

\end{document}